\def\@fnsymbol#1{\ensuremath{\ifcase#1\or \dagger\or \ddagger\or
   \mathsection\or \mathparagraph\or \|\or **\or \dagger\dagger
   \or \ddagger\ddagger \else\@ctrerr\fi}}
\newcommand*\samethanks[1][\value{footnote}]{\footnotemark[#1]}
\DeclareMathOperator*{\argmax}{arg\,max}
\DeclareMathOperator*{\E}{\mathrm{E}}
\newcommand{\gft}{\mathsf{GFT}}
\newcommand{\fb}{\mathsf{FB}}
\renewcommand{\sb}{\mathsf{SB}}
\newcommand{\fp}{\mathsf{FixedP}}
\renewcommand{\sp}{\mathsf{SellerP}}
\newcommand{\bp}{\mathsf{BuyerP}}
\renewcommand{\d}{\,\mathrm{d}}
\newtheorem{theorem}{Theorem}
\newtheorem{lemma}{Lemma}
\theoremstyle{definition}
\newtheorem{remark}{Remark}
\definecolor{myblue}{rgb}{0.15, 0.1, 0.95}
\definecolor{mygreen}{rgb}{0.15, 0.65, 0.25}
\definecolor{myred}{rgb}{0.75, 0.25, 0.15}
\title{Approximately Efficient Bilateral Trade}
\author{Yuan Deng\thanks{Google Research. Email: \texttt{\{dengyuan,maojm,balusivan\}@google.com}.} \and Jieming Mao\samethanks[1] \and Balasubramanian Sivan\samethanks[1] \and Kangning Wang\thanks{Duke University. Email: \texttt{knwang@cs.duke.edu}.}}
\date{}
\begin{document}
\maketitle

\begin{abstract}
We study bilateral trade between two strategic agents. The celebrated result of Myerson and Satterthwaite %\citet{myerson1983efficient}
states that in general, no incentive-compatible, individually rational and weakly budget balanced mechanism can be efficient. I.e., no mechanism with these properties can guarantee a trade whenever buyer value exceeds seller cost. Given this, a natural question is whether there exists a mechanism with these properties that guarantees a \emph{constant fraction} of the first-best gains-from-trade, namely a constant fraction of the gains-from-trade attainable whenever buyer's value weakly exceeds seller's cost. In this work, we positively resolve this long-standing open question on constant-factor approximation, mentioned in several previous works, using a simple mechanism. 
\end{abstract}

\section{Introduction}
In a bilateral trade, a seller holds an item and is trying to sell it to a buyer. The buyer's private value $v$ is drawn from a cumulative distribution function (CDF) $F$, and the seller's private cost $c$ for selling the item is independently drawn from a CDF $G$. If a transaction happens between $v$ and $c$, the society as a whole gains utility of $v - c$. The \emph{gains-from-trade} ($\gft$) refer to the expected utility gain from trading. If trading probability between $v$ and $c$ is $x(v, c)$, then
\[
\gft = \E_{\substack{v \sim F \\ c \sim G}}\big[(v - c) \cdot x(v, c)\big].
\]
Ideally, to maximize $\gft$, a trade should always happen when $v > c$ and never happen when $v < c$. The resulting optimal $\gft$ is termed the \emph{first best} ($\fb$). Namely,
\[
\fb = \E_{\substack{v \sim F \\ c \sim G}}\big[(v - c) \cdot \mathbb{1}\{v \geq c\}\big].
\]

The seminal work of \citet{myerson1983efficient} shows that if both agents are self-interested, it is impossible to devise a Bayesian incentive-compatible (BIC), individually rational (IR) and weakly budget balanced\footnote{A mechanism is weakly budget balanced if the total payment of the participants is ex-post non-negative, i.e., the payment from the buyer is always at least the revenue of the seller. Moreover, a mechanism is strongly budget balanced if the total payment of the participants is ex-post exactly $0$, i.e., the payment from the buyer is always the revenue of the seller.} (WBB) mechanism that achieves the first-best $\gft$, as long as the distribution supports of $F$ and $G$ ``overlap''. This impossibility result motivates the natural question of whether there exists a mechanism with these properties (BIC, IR, and WBB) that guarantees a \emph{constant fraction} of the first-best $\gft$.

\paragraph{Benchmarks and Simple Mechanisms}
To better explain related work and our results, we first introduce some benchmarks and simple mechanisms. For simplicity, we assume $F$ and $G$ are both continuous distributions supported on a bounded interval $[0, 1]$ with positive densities. We show in \cref{rem:bounded_continuous_dist} why this is without loss of generality. We slightly overload the notations below and use them to denote both the mechanisms and the gains-from-trade they obtain.
\begin{itemize}
\item $\fb$: the first best. It captures the optimal $\gft$ if the agents are not strategic and there is a trade whenever $v \geq c$. Formally, $\fb = \int_0^1 \int_c^1 (v - c) \d F(v) \d G(c)$.
\item $\sb$: the second best. It captures the optimal $\gft$ achieved by any Bayesian incentive-compatible (BIC), individually rational (IR) and weakly budget balanced (WBB) mechanism.
\item $\fp$: the fixed-price mechanism. The mechanism sets a fixed price $p$, and there is a trade if buyer's value is at least $p$ and seller's cost is at most $p$. Formally,  $\fp = \max_p \int_0^p \int_p^1 (v - c) \d F(v) \d G(c)$. The mechanism is dominant-strategy incentive-compatible (DSIC, stronger than BIC), IR and strongly budget balanced (SBB, stronger than WBB).
\item $\sp$: the seller-pricing mechanism. The mechanism delegates the pricing power to the seller, who in turn posts a price $r_c$ to maximize her profit with knowledge of $c$ and $F$. The buyer then decides whether to buy depending on whether $v \geq r_c$. Formally, $\sp = \int_0^1 \int_{r_c}^1 (v - c) \d F(v) \d G(c)$, where $r_c \in \argmax_{p} (p - c) \cdot (1 - F(p))$ is the price the seller sets when she has cost $c$. The mechanism is BIC, IR and SBB.
\item $\bp$: the buyer-pricing mechanism. Symmetric to $\sp$, the mechanism delegates the pricing power to the buyer, who sets a price $r'_v$ to maximize his utility, and then the seller decides whether to sell based on whether $c \leq r'_v$. Formally, $\bp = \int_0^1 \int_0^{r'_v} (v - c) \d G(c) \d F(v)$, where $r'_v \in \argmax_{p} (v - p) \cdot G(p)$. The mechanism is BIC, IR and SBB.
\end{itemize}

With these notations, the result of \citet{myerson1983efficient} demonstrates $\sb < \fb$ whenever the supports of $F$ and $G$ ``overlap''. It has remained an %long-standing
open question ever since, on how far apart $\sb$ and $\fb$ can be. Specifically, is it the case that $\sb$ is always at least a constant fraction of $\fb$? We answer this question in the positive. In particular, the better of (or a randomization over) $\sp$ and $\bp$ guarantees at least $10\%$ of the first-best gains-from-trade.

\begin{theorem}
\label{thm:main}
$\fb \leq 2 \cdot \sp + 8 \cdot \bp \leq 10 \cdot \max(\sp, \bp) \leq 10 \cdot \sb$.
\end{theorem}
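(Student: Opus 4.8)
My plan is to put all four quantities into a common integral form, extract the two elementary ``a monopolist can always post a fixed price'' lower bounds, and then bound $\fb$ region by region, with a separate (harder) argument for one degenerate regime.

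\emph{Rewriting the benchmarks.} Using the layer-cake identity $(v-c)\,\mathbb 1\{v\ge c\}=\int_0^1\mathbb 1\{c\le t\le v\}\,\d t$ together with independence, one gets the clean formula $\fb=\int_0^1 G(t)\,(1-F(t))\,\d t$. For the seller-pricing mechanism an integration by parts writes the gains-from-trade conditioned on cost $c$ as seller profit plus buyer surplus, $\int_{r_c}^{1}(v-c)\,\d F(v)=(r_c-c)\bigl(1-F(r_c)\bigr)+\int_{r_c}^{1}\bigl(1-F(v)\bigr)\,\d v$, and symmetrically $\int_{0}^{r'_v}(v-c)\,\d G(c)=(v-r'_v)G(r'_v)+\int_{0}^{r'_v}G(c)\,\d c$ for $\bp$.

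\emph{The two key lower bounds.} A seller with cost $c$ may always deviate to an arbitrary fixed price $\phi$, and never prices below cost, so her profit in $\sp$ is at least $(\phi-c)^+\bigl(1-F(\phi)\bigr)$; taking expectations over $c$ and discarding the nonnegative buyer-surplus term yields, for every $\phi$,
\[
\sp\ \ge\ \bigl(1-F(\phi)\bigr)\int_0^\phi G(c)\,\d c ,\qquad\text{and, by the mirror argument,}\qquad \bp\ \ge\ G(\phi)\int_\phi^{1}\bigl(1-F(v)\bigr)\,\d v .
\]

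\emph{The generic case.} Fix a threshold $\phi^\star$, split $\fb=\int_0^{\phi^\star}G(1-F)+\int_{\phi^\star}^{1}G(1-F)$, bound $1-F\le 1$ on the first piece and $G\le 1$ on the second, and apply the bounds above at $\phi=\phi^\star$ to obtain $\fb\le \int_0^{\phi^\star} G(t)\,\d t+\int_{\phi^\star}^{1}(1-F(t))\,\d t\le \tfrac{\sp}{1-F(\phi^\star)}+\tfrac{\bp}{G(\phi^\star)}$. If some $\phi^\star$ satisfies $F(\phi^\star)\le\tfrac12$ and $G(\phi^\star)\ge\tfrac18$ — equivalently, if the $\tfrac18$-quantile of $G$ lies at or below the median of $F$ — this already gives $\fb\le 2\sp+8\bp$; and $2\sp+8\bp\le 10\max(\sp,\bp)\le 10\,\sb$ is immediate since $\sp,\bp$ are themselves BIC, IR, and weakly budget balanced.

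\emph{The main obstacle.} The real content is the complementary regime, where the $\tfrac18$-quantile of $G$ sits strictly above the median of $F$ (roughly: buyer values skew low while seller costs skew high). There the crude split leaves an uncovered middle interval on which both $G$ and $1-F$ are small, and — as power-law examples $F(v)=v^{1/k}$, $G(c)=c^{k}$ show — no single fixed price captures a constant fraction of $\fb$, so the fixed-price lower bounds do not suffice on their own. Here one must exploit that the posted reserves $r_c,r'_v$ \emph{adapt} to $c,v$: I would replace the fixed-price bounds by the full reserve-price gains-from-trade from the first step (surplus plus revenue) and run a finer, scale-by-scale (dyadic/quantile) accounting showing that the adaptive mechanisms still recover a constant fraction of the middle interval. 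Making that bound yield exactly the constants $2$ and $8$, so that this regime glues cleanly onto the generic one, is where the delicate work lies.
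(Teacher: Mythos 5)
Your setup is sound and the ``generic case'' is correct, but there is a genuine gap: the proposal stops at exactly the point where the proof has content. The single-fixed-price lower bounds $\sp \ge (1-F(\phi))\int_0^\phi G(c)\,\d c$ and $\bp \ge G(\phi)\int_\phi^1 (1-F(v))\,\d v$ force a single global threshold $\phi^\star$, and you correctly observe that no choice of $\phi^\star$ works when the $\tfrac18$-quantile of $G$ lies above the median of $F$. What you leave unproven --- the ``finer, scale-by-scale (dyadic/quantile) accounting'' in the complementary regime --- is the entire theorem in that regime; the sketch does not specify the scales, the telescoping quantity, or how the constants $2$ and $8$ emerge. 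Asserting that adaptive reserves ``still recover a constant fraction of the middle interval'' is the conclusion, not an argument for it, so as written the proposal establishes the theorem only under a restrictive quantile hypothesis.

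The paper avoids the case split entirely. After swapping the order of integration it proves a pointwise inequality $\fb(c) \le 2\sp(c) + 8\bp(c)$ for each cost $c$, where $\sp(c)$ is the seller's profit from posting the $c$-dependent price $\mu(c)$ (the median of $F$ conditioned on exceeding $c$) and $\bp(c)$ is the buyer's utility from posting $\mu^{(-2)}(v)$. Partitioning the tail $[\mu(c),1]$ into the iterated conditional-median intervals $[\mu^{(k)}(c),\mu^{(k+1)}(c)]$, each carrying half the remaining mass, lets $\int_{\mu(c)}^1 (v-\mu(c))\,\d F(v)$ be written as a telescoping double sum whose geometric quantile decay produces the factors that yield $4\bp(c)$, and then $\fb(c)=2\sp(c)+2\int_{\mu(c)}^1(v-\mu(c))\,\d F(v)\le 2\sp(c)+8\bp(c)$. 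This is precisely the accounting your sketch gestures at, but it is done with a $c$-adaptive price from the outset, so there is no ``generic'' versus ``hard'' dichotomy --- one calculation covers every $c$ and every instance. Your fixed-$\phi$ bounds are the degenerate, non-adaptive special case of the paper's $\sp(c),\bp(c)$ bounds, which is exactly why they fail on the power-law instances you cite. To complete the proposal you would need to actually carry out the adaptive telescoping step, at which point you would have reproduced the paper's argument.
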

\begin{remark}
In \cref{sec:improve}, we improve the constant to get a $8.23$-approximation.
\end{remark}

\subsection{Related Work}

\begin{sloppypar}
There has been a large body of work towards answering whether $\sb = \Omega(1) \cdot \fb$. In particular, \citet*{mcafee2008gains} shows $\fp \geq \frac{1}{2} \cdot \fb$ when $F$ and $G$ are i.i.d. and \citet*{DBLP:conf/wine/BlumrosenM16} show $\sp \geq \frac{1}{e} \cdot \fb$ when $F$ satisfies the monotone-hazard-rate condition. As for the negative direction, \citet{DBLP:conf/wine/BlumrosenM16} show that for any $\varepsilon > 0$, $\sb \leq \left(\frac{2}{e} + \varepsilon\right) \cdot \fb$ in some instance. \citet*{DBLP:journals/corr/BlumrosenD16} show that for any $\varepsilon > 0$, there is some instance where $\fp < \varepsilon \cdot \fb$ (and further $\fp < \varepsilon \cdot \sb$).

A closely related question of welfare approximation has also been extensively studied before. The welfare equals $\gft$ plus the term $\E_{c \sim G}[c]$, i.e., the welfare equals buyer's value when trade happens and it equals seller's cost when no trade happens. Maximizing the welfare is equivalent to maximizing the gains-from-trade, but providing a constant approximation to first-best gains-from-trade is much harder than doing the same for first-best welfare. For instance, not having any trade at all already gives a non-zero (and often good) approximation to welfare, but it gives a zero approximation to gains-from-trade.  \citet*{DBLP:journals/corr/BlumrosenD16} show a $\left(1 - \frac{1}{e}\right)$-approximation to the first-best welfare. \citet*{DBLP:journals/corr/abs-2107-14327} improve the approximation ratio to $1 - \frac{1}{e} + 10^{-4}$.

\citet*{DBLP:conf/sigecom/BrustleCWZ17} present a $2$-approximation against second-best gains-from-trade; in particular, they show $\sb \leq \sp + \bp$. Our result is ``stronger'' in the sense that we show the better of $\sp$ and $\bp$ is a constant-approximation of $\fb$ and not only $\sb$. \citet*{DBLP:conf/stoc/DuttingFLLR21} and \citet*{DBLP:journals/corr/abs-2107-14327} consider settings without full knowledge of the distributions.

The \emph{double auction} is a generalized version of bilateral trade, where there are multiple buyers and sellers in the market. There has been a significant amount of work on characterizing and approximating the efficient solutions in related settings; see e.g. \citep*{DBLP:conf/soda/Colini-Baldeschi16,DBLP:conf/sigecom/Colini-Baldeschi17,DBLP:conf/sigecom/BrustleCWZ17,DBLP:conf/sigecom/Babaioff0GZ18,DBLP:conf/soda/BalseiroMLZ19,DBLP:conf/soda/BabaioffGG20,DBLP:conf/soda/CaiGMZ21}.
\end{sloppypar}

\section{Proof of Constant Approximation}
Recall that $F$ and $G$ are independent continuous CDFs supported on $[0, 1]$ with positive densities (see \cref{rem:bounded_continuous_dist} for why everything apart from independence is without loss of generality). Let $\mu(x)$ be the median of $F_{| \geq x}$ (where $F_{| \geq x}(z) = 0$ for $z < x$ and $F_{| \geq x}(z) = \frac{F(z) - F(x)}{1 - F(x)}$ for $z \geq x$), i.e., $\mu(x) = F^{(-1)} \left(\frac{1 + F(x)}{2}\right)$. Let $\mu^{(k)}(\cdot)$ be the composition of $k$ functions of $\mu(\cdot)$% (i.e., $\mu^{(k)}(\cdot) = \mu\left(\mu^{(k-1)}(\cdot)\right)$)
, and $\mu^{(-k)}(\cdot)$ be its inverse ($0$ if it does not exist). We first give immediate bounds for $\fb$, $\sp$ and $\bp$:
\begin{align*}
\fb &= \int_0^1 \int_c^1 (v - c) \d F(v) \d G(c)\\
&\leq 2 \int_0^1 \int_{\mu(c)}^1 (v - c) \d F(v) \d G(c),
\end{align*}
where the second step holds since $\mu(x)$ is the median of $F_{| \geq x}$, and we preserve the better half.

For $\sp$, we have
\begin{align*}
\sp &\geq \int_0^1 \int_{\mu(c)}^1 \Big(\mu(c) - c\Big) \d F(v) \d G(c).
\end{align*}
The right-hand side (RHS) is the seller's expected profit when setting the price at $\mu(c)$ for each $c$, which is at most her optimal profit and thus at most the gains-from-trade in the profit-optimal seller-pricing mechanism (which is the left-hand side (LHS)).

Similarly, for $\bp$, we have
\begin{align*}
\bp &\geq \int_{\mu^{(2)}(0)}^1 \int_{0}^{\mu^{(-2)}(v)} \Big(v - \mu^{(-2)}(v)\Big) \d G(c) \d F(v)\\
&= \int_{0}^1 \int_{\mu^{(2)}(c)}^1 \Big(v - \mu^{(-2)}(v)\Big) \d F(v) \d G(c).
\end{align*}
Again, the RHS is the buyer's expected utility when setting the price at $\mu^{(-2)}(v)$ for each $v$, which is at most the gains-from-trade in the utility-optimal buyer-pricing mechanism (the LHS).

For each cost $c$, define $\fb(c) = 2 \int_{\mu(c)}^1 (v - c) \d F(v)$, $\sp(c) = \int_{\mu(c)}^1 \Big(\mu(c) - c\Big) \d F(v)$ and $\bp(c) = \int_{\mu^{(2)}(c)}^1 \Big(v - \mu^{(-2)}(v)\Big) \d F(v)$. The bounds above simplify to:
\[
\fb \leq \int_0^1 \fb(c) \d G(c), \quad \sp \geq \int_0^1 \sp(c) \d G(c), \quad \bp \geq \int_0^1 \bp(c) \d G(c).
\]

We now show that, for any $c$, $\fb(c) \leq 2 \cdot \sp(c) + 8 \cdot \bp(c)$, thus proving our result. The crux of our proof is a partition of value space by quantile. We decompose $\fb(c)$ into two parts and bound them by $\sp(c)$ and $\bp(c)$ separately. %In this way, we are able to decompose $\fb(c)$ into two parts and bound them by $\sp(c)$ and $\bp(c)$ separately.

\begin{lemma}
For any $c$, $\fb(c) \leq 2 \cdot \sp(c) + 8 \cdot \bp(c)$.
\end{lemma}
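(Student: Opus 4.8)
The plan is to strip off the $\sp(c)$ term by an algebraic identity, then rewrite both the remaining ``buyer part'' of $\fb(c)$ and the quantity $\bp(c)$ as one-dimensional integrals over the value axis via Fubini, and finally compare the two integrands. Throughout, write $m=\mu(c)$ and $M=\mu^{(2)}(c)$, and recall the defining properties of the median map: $F(m)=\tfrac{1+F(c)}{2}$, $F(M)=\tfrac{3+F(c)}{4}$ (equivalently $1-F(m)=\tfrac{1-F(c)}{2}$, $1-F(M)=\tfrac{1-F(c)}{4}$), and more generally $F(\mu^{(2)}(y))=\tfrac{3+F(y)}{4}$. Since $v-c=(v-m)+(m-c)$ and, by definition, $\sp(c)=\int_m^1(m-c)\,\d F(v)$, we get $\fb(c)=2\int_m^1(v-m)\,\d F(v)+2\,\sp(c)$, so it suffices to prove $\int_m^1(v-m)\,\d F(v)\le 4\,\bp(c)$. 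Writing $v-m=\int_m^v\d y$ and swapping the order of integration, the left-hand side equals $\int_m^1(1-F(y))\,\d y$.

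The main step is to obtain a clean closed form for $\bp(c)=\int_M^1\big(v-\mu^{(-2)}(v)\big)\,\d F(v)$. I would write $v-\mu^{(-2)}(v)=\int_{\mu^{(-2)}(v)}^{v}\d y$ and swap the order of integration. Because $\mu^{(2)}$ is increasing and $v\ge M$ implies $\mu^{(-2)}(v)\ge c$ (with $\mu^{(-2)}$ genuinely defined on all of $[M,1]$), the slice at a fixed $y\in[c,1]$ has $v$ ranging over $[\max(y,M),\,\mu^{(2)}(y)]$; the kink of the lower limit at $y=M$ splits the integral into $\int_c^M\big(F(\mu^{(2)}(y))-F(M)\big)\,\d y+\int_M^1\big(F(\mu^{(2)}(y))-F(y)\big)\,\d y$. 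Applying $F(\mu^{(2)}(y))=\tfrac{3+F(y)}{4}$ turns the first integrand into $\tfrac14\big(F(y)-F(c)\big)$ and the second into $\tfrac34\big(1-F(y)\big)$, so
\[
4\,\bp(c)=\int_c^M\big(F(y)-F(c)\big)\,\d y+3\int_M^1\big(1-F(y)\big)\,\d y.
\]
(The same identity also follows from the substitution $v=\mu^{(2)}(u)$ together with a telescoping argument, if one prefers to avoid the two-dimensional region.)

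It remains to compare the integrands. The clean fact is that for every $y\ge m$ we have $F(y)-F(c)\ge 1-F(y)$, since this rearranges to $F(y)\ge\tfrac{1+F(c)}{2}=F(m)$. Hence $\int_c^M\big(F(y)-F(c)\big)\,\d y\ge\int_m^M\big(F(y)-F(c)\big)\,\d y\ge\int_m^M\big(1-F(y)\big)\,\d y$, and bounding $3\int_M^1\big(1-F(y)\big)\,\d y\ge\int_M^1\big(1-F(y)\big)\,\d y$ gives $4\,\bp(c)\ge\int_m^M\big(1-F(y)\big)\,\d y+\int_M^1\big(1-F(y)\big)\,\d y=\int_m^1\big(1-F(y)\big)\,\d y=\int_m^1(v-m)\,\d F(v)$. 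Combined with the first paragraph, $\fb(c)=2\int_m^1(v-m)\,\d F(v)+2\,\sp(c)\le 8\,\bp(c)+2\,\sp(c)$, as claimed.

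I expect the Fubini computation of $4\,\bp(c)$ to be the only delicate part: one has to pin down the transformed region of integration carefully --- that $\mu^{(-2)}(v)\ge c$ on all of $[M,1]$, that $\mu^{(2)}(y)\le 1$, and that the lower limit has a kink at $y=M$ producing exactly the two displayed pieces --- before the median identity collapses the compositions of $\mu$. Everything afterwards is the one-line pointwise estimate above. It is also worth noting where slack is discarded (the term $2\int_M^1(1-F(y))\,\d y$, the shrinkage of $[c,M]$ to $[m,M]$, and the generically strict pointwise inequality), since this is presumably the slack that \cref{sec:improve} exploits to improve the constant.
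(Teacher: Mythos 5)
Your proof is correct, and it takes a genuinely different route from the paper's. The paper bounds $\int_{\mu(c)}^1\big(v-\mu(c)\big)\,\d F(v)$ and $\bp(c)$ by decomposing the value axis into the dyadic quantile ``ladder'' $\big[\mu^{(t)}(c),\mu^{(t+1)}(c)\big]$, telescoping $v-\mu(c)\le\sum_{t\le k}\big(\mu^{(t+1)}(c)-\mu^{(t)}(c)\big)$ on block $k$, swapping the double sum, and using that each block carries half the remaining probability, which geometrically resums the tail. You instead apply Fubini to both sides to turn them into one-dimensional integrals of survival-type quantities on the $y$-axis, then exploit the exact identity $F(\mu^{(2)}(y))=\tfrac{3+F(y)}{4}$ to get the \emph{closed form} $4\,\bp(c)=\int_c^M\big(F(y)-F(c)\big)\,\d y+3\int_M^1\big(1-F(y)\big)\,\d y$, after which a single pointwise inequality $F(y)-F(c)\ge 1-F(y)$ for $y\ge m$ finishes. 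In effect you have collapsed the paper's infinite ladder sum into an explicit formula. Both arguments give exactly the bound $4\,\bp(c)\ge\int_m^1\big(1-F(y)\big)\,\d y$ and hence the same constant, and both generalize to the $\lambda$-quantile variant of \cref{sec:improve}: in your framework one replaces the median identity with $F(\mu^{(2)}(y))=\lambda(2-\lambda)+(1-\lambda)^2F(y)$ and the pointwise bound becomes $F(y)-F(c)\ge\tfrac{\lambda}{1-\lambda}\big(1-F(y)\big)$ for $y\ge m$, recovering $\fb(c)\le\tfrac{1}{1-\lambda}\sp(c)+\tfrac{1}{\lambda(1-\lambda)^2}\bp(c)$. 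Your identification of where slack is thrown away (the shrinkage $[c,M]\to[m,M]$, the coefficient $3$ versus $1$, and the strictness of the pointwise bound) is accurate; a more symmetric accounting of the two tail pieces rather than a crude $\max$ is precisely what the $\lambda$-optimization exploits. The only step that needed care --- verifying that $\mu^{(-2)}$ is well-defined on all of $[M,1]$ and that the slice in $v$ at fixed $y$ is $[\max(y,M),\mu^{(2)}(y)]$, nonempty exactly for $y\in[c,1]$ --- you handled correctly, using strict monotonicity of $\mu$ (which follows from the positive-density assumption) and $\mu^{(2)}(1)=1$.
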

\begin{proof}
We first have
\begin{align*}
\fb(c) &= 2 \int_{\mu(c)}^1 (v - c) \d F(v)\\
&= 2 \int_{\mu(c)}^1 \Big(\mu(c) - c\Big) \d F(v) + 2 \int_{\mu(c)}^1 \Big(v - \mu(c)\Big) \d F(v)\\
&= 2 \cdot \sp(c) + 2 \int_{\mu(c)}^1 \Big(v - \mu(c)\Big) \d F(v).
\end{align*}
We now proceed to show that $\int_{\mu(c)}^1 \Big(v - \mu(c)\Big) \d F(v) \leq 4 \cdot \bp(c)$. Observe that
\begin{align*}
\int_{\mu(c)}^1 \Big(v - \mu(c)\Big) \d F(v) &= \sum_{k = 1}^{\infty} \int_{\mu^{(k)}(c)}^{\mu^{(k + 1)}(c)} \Big(v - \mu(c)\Big) \d F(v)\\
&\leq \sum_{k = 1}^{\infty} \int_{\mu^{(k)}(c)}^{\mu^{(k + 1)}(c)} \Big(\mu^{(k + 1)}(c) - \mu(c)\Big) \d F(v)\\
&= \sum_{k = 1}^{\infty} \sum_{t = 1}^k \int_{\mu^{(k)}(c)}^{\mu^{(k + 1)}(c)} \Big(\mu^{(t + 1)}(c) - \mu^{(t)}(c)\Big) \d F(v)\\
&= \sum_{t = 1}^{\infty} \sum_{k = t}^{\infty} \int_{\mu^{(k)}(c)}^{\mu^{(k + 1)}(c)} \Big(\mu^{(t + 1)}(c) - \mu^{(t)}(c)\Big) \d F(v)\\
&= 2 \sum_{t = 1}^{\infty} \int_{\mu^{(t)}(c)}^{\mu^{(t + 1)}(c)} \Big(\mu^{(t + 1)}(c) - \mu^{(t)}(c)\Big) \d F(v).
\end{align*}
The last step uses the definition of $\mu(\cdot)$: $k = t$ counts for half the probability of all $k \geq t$.

On the other hand, for $\bp(c)$, recall that $\bp(c) = \int_{\mu^{(2)}(c)}^1 \Big(v - \mu^{(-2)}(v)\Big) \d F(v)$ and we have
\begin{align*}
\int_{\mu^{(2)}(c)}^1 \Big(v - \mu^{(-2)}(v)\Big) \d F(v) &= \sum_{t = 2}^{\infty} \int_{\mu^{(t)}(c)}^{\mu^{(t + 1)}(c)} \Big(v - \mu^{(-2)}(v)\Big) \d F(v)\\
&\geq \sum_{t = 2}^{\infty} \int_{\mu^{(t)}(c)}^{\mu^{(t + 1)}(c)} \Big(\mu^{(t)}(c) - \mu^{(t - 1)}(c)\Big) \d F(v)\\
&= \sum_{t = 1}^{\infty} \int_{\mu^{(t + 1)}(c)}^{\mu^{(t + 2)}(c)} \Big(\mu^{(t + 1)}(c) - \mu^{(t)}(c)\Big) \d F(v)\\
&= \frac{1}{2} \sum_{t = 1}^{\infty} \int_{\mu^{(t)}(c)}^{\mu^{(t + 1)}(c)} \Big(\mu^{(t + 1)}(c) - \mu^{(t)}(c)\Big) \d F(v),
\end{align*}
where the last step again applies the definition of $\mu(\cdot)$. Therefore, we conclude the proof that $\fb(c) \leq 2 \cdot \sp(c) + 8 \cdot \bp(c)$.
\end{proof}

\begin{remark}\label{rem:bounded_continuous_dist}
It is without loss of generality to consider bounded continuous distributions with positive densities: For any general distributions $F$ and $G$ on $[0, 1]$, there is a sequence of continuous distributions on $[0, 1]$ with positive densities converging to it in the L\'evy metric. All of $\fb$, seller's profit in $\sp$ and buyer's utility in $\bp$ are continuous in $F$ and $G$ (in the L\'evy metric). It is also without loss of generality to consider bounded supports: As long as $\fb$ is finite, almost all of the contribution comes from $(v, c) \in [-M, M]^2$ for some $M$.
\end{remark}

\begin{remark}
By symmetry, we also have $\fb \leq 8 \cdot \sp + 2 \cdot \bp$. Thus, using $\sp$ with probability $\alpha$ and $\bp$ with probability $1 - \alpha$ for any $\alpha \in [0.2, 0.8]$ gives a $10$-approximation to $\fb$.
\end{remark}

\begin{remark}
We can use a parameter to control the quantile of $\mu(x)$ (instead of using the median) in $F_{| \geq x}$. This improves the approximation constant to $8.23$. The details are deferred to \cref{sec:improve}.
\end{remark}

\bibliographystyle{plainnat}
\bibliography{ref}

\appendix
\section{Improving the Constant}
\label{sec:improve}
To improve the approximation constant of \cref{thm:main}, instead of setting $\mu(x)$ to be the median of $F_{| \geq x}$, we introduce a parameter $\lambda$ to control the quantile of $\mu(x)$ in $F_{| \geq x}$:
\[
\mu(x) := F^{(-1)} \Big(\lambda + (1 - \lambda) F(x)\Big).
\]

Similar to the proof of \cref{thm:main}, we have
\begin{align*}
\fb &= \int_0^1 \int_c^1 (v - c) \d F(v) \d G(c) \leq \frac{1}{1 - \lambda} \int_0^1 \int_{\mu(c)}^1 (v - c) \d F(v) \d G(c),
\end{align*}
\begin{align*}
\sp &\geq \int_0^1 \int_{\mu(c)}^1 \Big(\mu(c) - c\Big) \d F(v) \d G(c),
\end{align*}
and
\begin{align*}
\bp &\geq \int_{\mu^{(2)}(0)}^1 \int_{0}^{\mu^{(-2)}(v)} \left(v - \mu^{(-2)}(v)\right) \d G(c) \d F(v) = \int_{0}^1 \int_{\mu^{(2)}(c)}^1 \left(v - \mu^{(-2)}(v)\right) \d F(v) \d G(c).
\end{align*}

Let $\fb(c) = \frac{1}{1 - \lambda} \int_{\mu(c)}^1 (v - c) \d F(v)$, $\sp(c) = \int_{\mu(c)}^1 (\mu(c) - c) \d F(v)$ and $\bp(c) = \int_{\mu^{(2)}(c)}^1 (v - \mu^{(-2)}(v)) \d F(v)$. We have
\[
\fb \leq \int_0^1 \fb(c) \d G(c), \quad \sp \geq \int_0^1 \sp(c) \d G(c), \quad \bp \geq \int_0^1 \bp(c) \d G(c).
\]

Fix any $c$ from now on.
\begin{align*}
\fb(c) &= \frac{1}{1 - \lambda} \int_{\mu(c)}^1 (v - c) \d F(v)\\
&= \frac{1}{1 - \lambda} \int_{\mu(c)}^1 \Big(\mu(c) - c\Big) \d F(v) + \frac{1}{1 - \lambda} \int_{\mu(c)}^1 \Big(v - \mu(c)\Big) \d F(v)\\
&= \frac{1}{1 - \lambda} \cdot \sp(c) + \frac{1}{1 - \lambda} \sum_{k = 1}^{\infty} \int_{\mu^{(k)}(c)}^{\mu^{(k + 1)}(c)} \Big(v - \mu(c)\Big) \d F(v)\\
&\leq \frac{1}{1 - \lambda} \cdot \sp(c) + \frac{1}{1 - \lambda} \sum_{k = 1}^{\infty} \int_{\mu^{(k)}(c)}^{\mu^{(k + 1)}(c)} \left(\mu^{(k + 1)}(c) - \mu(c)\right) \d F(v)\\
&= \frac{1}{1 - \lambda} \cdot \sp(c) + \frac{1}{1 - \lambda} \sum_{k = 1}^{\infty} \sum_{t = 1}^k \int_{\mu^{(k)}(c)}^{\mu^{(k + 1)}(c)} \left(\mu^{(t + 1)}(c) - \mu^{(t)}(c)\right) \d F(v)\\
&= \frac{1}{1 - \lambda} \cdot \sp(c) + \frac{1}{1 - \lambda} \sum_{t = 1}^{\infty} \sum_{k = t}^{\infty} \int_{\mu^{(k)}(c)}^{\mu^{(k + 1)}(c)} \left(\mu^{(t + 1)}(c) - \mu^{(t)}(c)\right) \d F(v)\\
&= \frac{1}{1 - \lambda} \cdot \sp(c) + \frac{1}{\lambda (1 - \lambda)} \sum_{t = 1}^{\infty} \int_{\mu^{(t)}(c)}^{\mu^{(t + 1)}(c)} \left(\mu^{(t + 1)}(c) - \mu^{(t)}(c)\right) \d F(v).
\end{align*}
The last step uses the definition of $\mu(\cdot)$: $k = t$ counts for $\lambda$-fraction of the probability of all $k \geq t$.

Additionally, we have
\begin{align*}
\bp(c) &= \int_{\mu^{(2)}(c)}^1 \left(v - \mu^{(-2)}(v)\right) \d F(v)\\
&= \sum_{t = 2}^{\infty} \int_{\mu^{(t)}(c)}^{\mu^{(t + 1)}(c)} \left(v - \mu^{(-2)}(v)\right) \d F(v)\\
&\geq \sum_{t = 2}^{\infty} \int_{\mu^{(t)}(c)}^{\mu^{(t + 1)}(c)} \left(\mu^{(t)}(c) - \mu^{(t - 1)}(c)\right) \d F(v)\\
&= \sum_{t = 1}^{\infty} \int_{\mu^{(t + 1)}(c)}^{\mu^{(t + 2)}(c)} \left(\mu^{(t + 1)}(c) - \mu^{(t)}(c)\right) \d F(v)\\
&= (1 - \lambda) \sum_{t = 1}^{\infty} \int_{\mu^{(t)}(c)}^{\mu^{(t + 1)}(c)} \left(\mu^{(t + 1)}(c) - \mu^{(t)}(c)\right) \d F(v).
\end{align*}

Therefore, $\fb(c) \leq \frac{1}{1 - \lambda} \cdot \sp(c) + \frac{1}{\lambda (1 - \lambda)^2} \cdot \bp(c)$. Setting $\lambda = 0.311$ gives $\fb \leq 8.23 \cdot \max(\sp, \bp)$.

\end{document}